\newtheorem{theorem}{Theorem}[section]
\newtheorem{corollary}[theorem]{Corollary}
\newtheorem{lemma}[theorem]{Lemma}
\def\O{\mathcal{O}}
\def\OT{\mathcal{\tilde O}}
\newcommand{\eexp}{\beta}
\newcommand{\STP}{\textsf{STC}}     
\newcommand{\STC}{\mathrm{STC}}     
\title[Approximation of Spanning Tree Congestion Using Hereditary Bisection]{Approximation of Spanning Tree Congestion Using 
Hereditary Bisection\footnote{A preliminary version of this work was presented in conference publications~\cite{kolman:24,kolman:25}. }
}
\author{Petr Kolman\thanks{Partially supported by grant 24-10306S of GA\v{C}R.}} 
\affiliation{
Faculty of Mathematics and Physics, Charles University, Prague, Czech Republic
}
\keywords{Spanning Tree Congestion, Bisection, Expansion, Divide and Conquer, Approximation, Graph Sparsification}
\begin{document}
\maketitle              
\begin{abstract}
The \emph{Spanning Tree Congestion} (\STP) problem is the following NP-hard problem: given
a graph $G$, construct a spanning tree $T$ of $G$ minimizing its maximum edge congestion 
where the congestion of an edge $e\in T$ is the number of edges $uv$ in $G$
such that the unique path between $u$ and $v$ in $T$ passes through $e$; the optimal
value for a given graph $G$ is denoted $\STC(G)$.

It is known that \emph{every} spanning tree is an $n/2$-approximation for the
{\STP} problem. A long-standing problem is to design a better approximation
algorithm. Our contribution towards this goal is an
$\O(\Delta\cdot\log^{3/2}n)$-approximation algorithm where $\Delta$ is the
maximum degree in $G$ and $n$ the number of vertices. For graphs with a maximum
degree bounded by a polylog of the number of vertices, this is an exponential
improvement over the previous best approximation.

Our main tool for the algorithm is a new lower bound on the spanning tree
congestion which is of independent interest. Denoting by $hb(G)$ the
\emph{hereditary bisection} of $G$ which is, roughly, the maximum bisection width over all
subgraphs of $G$, we prove that for every graph $G$, $\STC(G)\geq
\Omega(hb(G)/\Delta)$. 
\end{abstract}

\section{Introduction}
The spanning tree congestion problem has been studied from various viewpoints
for more than twenty years, yet our ability to approximate it is still extremely
limited. It has been shown that every spanning tree is an
$n/2$-approximation~\cite{Otachi:20} but no $o(n)$-approximation for
\emph{general} graphs is known. There is an
$\OT(n^{1-1/(\sqrt{\log n}+1)})$-approximation\footnote{The Big-O-Tilde notation
$\mathcal{\tilde O}$ hides logarithmic factors. } algorithm for graphs with
maximum degree bounded by polylog of the number of vertices~\cite{kolman:24}. 
Chandran~et~al.~\cite{Chandranetal:18} described an algorithm that constructs in
polynomial time a spanning tree with congestion at most $\O(\sqrt{mn}\log n)$.

On the hardness side, the strongest known lower bound states that no
$c$-approximation with $c$ smaller than $6/5$ is possible unless
$P=NP$~\cite{LCH:25}. The $\Omega(n)$ gap between the best upper and lower
bounds is highly unsatisfactory.
The problem is NP-hard even on graphs of degree at most three~\cite{atalig-etal:26}.

For a detailed overview of other related results, we refer to the survey paper
by Otachi~\cite{Otachi:20}, and to the recent paper by Lampis et al.~\cite
{Lampisetal:2025} that deals with the {\STP} problem from the perspective of
parameterized complexity.

\subsection{Our Results}
Our contribution in this paper is twofold. We describe an
$\O(\Delta\cdot\log^{3/2}n)$-approximation algorithm for the spanning tree
congestion problem where $\Delta$ is the maximum degree in $G$ and $n$ the
number of vertices. For graphs with maximum degree bounded by
$\Delta=o(n/\log^{3/2}n)$, we get $o(n)$-approximation; this significantly
extends the class of graphs for which sublinear approximation is known, and
provides a partial answer to the open problem P2 from our recent
paper~\cite{kolman:24}. Moreover, for graphs with a stronger bound on the
maximum degree, the approximation ratio is even better than $o(n)$. For example,
for graphs with a maximum degree bounded by polylog of the number of vertices, the
approximation is polylogarithmic which is an exponential improvement over the
previous best approximation~\cite{kolman:24}.

For graphs excluding any fixed graph as a minor (e.g., planar graphs or bounded genus graphs),
we get a slightly better bound of $\O(\Delta\cdot\log n)$ on the approximation ratio.

Our key tool in the algorithm design is a new lower bound on $\STC(G)$ which is
our second contribution. In the recent paper~\cite{kolman:24}, we proved that
$\STC(G) \geq \frac{b(G)}{\Delta \cdot \log n}$ where $b(G)$ is the bisection of
$G$. We strengthen the bound and prove that $\STC(G) \geq
\Omega\left(\frac{hb(G)}{\Delta}\right)$ where $hb(G)$ is the \emph{hereditary
bisection} of $G$ which is, roughly, the maximum of $b(H)$ over all subgraphs $H$ of $G$.
This is a corollary of another new lower bound saying that for every subgraph
$H$ of $G$, $\STC(G) \geq \frac{\eexp(H)\cdot n'}{3\cdot \Delta }$; here
$\beta(H)$ is the expansion of $H$ and $n'$ is the number of vertices in $H$.

\subsection{Sketch of the Algorithm}\label{subsec:sketch}
The algorithm uses the standard \emph{Divide and Conquer} framework and is
conceptually very simple: partition the graph by a $\frac34$-balanced cut into
two or more connected components, solve the problem recursively for each of the
components, and arbitrarily combine the spanning trees of the components into a
spanning tree of the entire graph. The structure of the algorithm is the same as
the structure of our recent $o(n)$-approximation algorithm~\cite{kolman:24} for
graphs with maximum degree bounded by $polylog(n)$ - there is a minor difference
in the tool used in the partitioning step and in the stopping condition for the
recursion. 

It is far from obvious that the \emph{Divide and Conquer} approach works for the
spanning tree congestion problem. The difficulty is that there is no apparent
relation between $\STC(G)$ and $\STC(H)$ for a subgraph $H$ of $G$. In the
paper~\cite{kolman:24}, we proved that $\STC(G)\geq
\frac{\STC(H)}{e(H,G\setminus H)}$ where $e(H,G\setminus H)$ denotes the number
of edges between the subgraph $H$ and the rest of the graph $G$.
Note that the bound is very weak when $e(H,G\setminus H)$ is large. Also, note
that the bound is tight in the following sense: there exist graphs for which
$\STC(G)$ and $\frac{\STC(H)}{e(H,G\setminus H)}$ are equal, up to a small
multiplicative constant. For example, let $G$ be a graph obtained from a
$3$-regular expander $H$ on $n$ vertices by adding a new vertex $r$ and
connecting it by an edge to every vertex of $H$. Then $\STC(H)=\Omega(n)$ (cf.
Lemma~\ref{lem:lb1}) while $\STC(G)=O(1)$ (consider the spanning tree of $G$
consisting only of all the edges adjacent to the new vertex $r$).

The main reason for the significant improvement of the bound on the
approximation ratio of the algorithm is the new lower bound $\STC(G)\geq
\Omega\left(\frac{hb(G)}{\Delta}\right)$ that connects $\STC(G)$ and properties
of subgraphs of $G$ in a much tighter way. This connection yields a simpler
algorithm with better approximation, broader applicability and simpler analysis.

\subsection{Preliminaries}\label{subsec:Prelim}
For an undirected graph $G=(V,E)$ and a subset of vertices $S\subset V$, we
denote by $E(S, V\setminus S)$ the set of edges between $S$ and $V\setminus S$
in $G$, and by $e(S, V\setminus S)=|E(S, V\setminus S)|$ the number of these
edges. An edge $\{u,v\}\in E$ is also denoted by $uv$ for notational simplicity.
For a subset of vertices $S\subseteq V$, $G[S]$ is the subgraph induced by $S$.
By $V(G)$, we mean the vertex set of the graph $G$ and by $E(G)$ its edge set.
Given a graph $G=(V,E)$ and an edge $e\in E$, $G\setminus e$ is the graph
$(V,E\setminus \{e\})$.

Let $G=(V,E)$ be a connected graph and $T=(V,E_T)$ be a spanning tree of $G$.
For an edge $uv\in E_T$, we denote by $S_u, S_v\subset V$ the vertex sets of the
two connected components of $T\setminus uv$ containing $u$ and $v$, resp.  The
\emph{congestion $c(uv)$ of the edge $uv$  with respect to $G$ and $T$}, is the
number of edges in $G$ between $S_u$ and $S_v$.  The \emph{congestion $c(G,T)$
of the spanning tree $T$ of $G$} is defined as $\max_{e\in E_T}c(e)$, and the
\emph{spanning tree congestion $\STC(G)$ of $G$} is defined as the minimum value
of $c(G,T)$ over all spanning trees $T$ of $G$.

A \emph{bisection} of a graph with $n$ vertices is a partition of its vertices
into two sets, $S$ and $V\setminus S$, each of size at most $\lceil n/2\rceil$.
The \emph{width of a bisection} $(S,V\setminus S)$ is $e(S,V\setminus S)$. 
In approximation algorithms, the requirement that each of
the two parts in a partition of $V$ is of size at most $\lceil n/2\rceil$ is
sometimes relaxed to $2n/3$, or to some other fraction, and then we talk about
balanced cuts. In particular, a \emph{$c$-balanced cut} is a partition of the
graph vertices into two sets, each of size at most~$c\cdot n$. 
The minimum width of a $c$-balanced cut of a graph $G$ is denoted $b_{c}(G)$. 
The \emph{hereditary bisection width\/} $hb(G)$ is defined~\cite{KM:04} 
as the maximum of $b_{2/3}(H)$ over all subgraphs $H$ of $G$. 
The \emph{edge expansion\/} of $G$ is 
\begin{align}\label{def:expansion}
\eexp(G)=\min_{A\subseteq V}\frac{e(A,V\setminus A)}{\min\{|A|,|V\setminus
A|\}}\ . 
\end{align}

There are several approximation and pseudo-approximation algorithms for
bisection and balanced cuts. In our algorithm, we will employ the algorithm by
Arora, Rao and Vazirani~\cite{ARV:09}, and for graphs excluding any fixed graph
as a minor (e.g., planar graphs), a slightly stronger algorithm by Klein,
Plotkin and Rao~\cite{KleinPR:93}.

\begin{theorem}[\cite{ARV:09,KleinPR:93}]\label{thm:arv}
A $3/4$-balanced cut of cost within a ratio of $\O(\sqrt{\log n})$ of the optimum
$2/3$-balanced cut can be computed in polynomial time. For graphs excluding any fixed graph 
as a minor, even $\O(1)$ ratio is possible.
\end{theorem}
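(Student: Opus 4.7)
The plan has two parts, corresponding to the two regimes in the statement. For the general case I would set up the standard SDP relaxation of sparsest/balanced cut that assigns a unit vector $v_i$ to each vertex, enforces the $\ell_2^2$ triangle inequalities $\|v_i-v_k\|^2 \le \|v_i-v_j\|^2 + \|v_j-v_k\|^2$, imposes a spreading constraint $\sum_{i,j}\|v_i-v_j\|^2 \ge \Omega(n^2)$ to force balance, and minimizes $\sum_{ij\in E}\|v_i-v_j\|^2$. This SDP value is a lower bound on the bisection cost, so it suffices to round any feasible solution into a $2/3$-balanced cut whose cost is within an $\O(\sqrt{\log n})$ factor of the SDP value.

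The heart of the argument is the ARV structure theorem: in every feasible solution there exist disjoint sets $S,T\subset V$ with $|S|,|T|=\Omega(n)$ such that $\|v_i-v_j\|^2\ge \Omega(1/\sqrt{\log n})$ for all $i\in S$, $j\in T$. I would obtain such sets by projecting the $v_i$ onto a uniformly random unit direction, taking the points sufficiently far above/below the median as candidates for $S$ and $T$, and then using a concentration/chaining argument to rule out a large matching of close pairs between them, crucially exploiting the triangle-inequality constraints. Given $S$ and $T$, a standard region-growing procedure from $S$ extracts a balanced cut of the desired quality; a threshold-rounding step then converts it to the $2/3$-balanced cut required by the statement. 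For graphs excluding a fixed minor I would switch to the Klein--Plotkin--Rao approach: prove an $\O(1)$-padded probabilistic decomposition theorem for $H$-minor-free graphs by induction, using excluded-minor surgery on shortest-path fragments, and then invoke multicommodity-flow/min-cut duality to convert this into an $\O(1)$ integrality gap between the LP relaxation and the optimum balanced cut.

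The main obstacle is the ARV structure theorem itself: quantifying both the $\Omega(n)$ sizes of $S,T$ and the $\Omega(1/\sqrt{\log n})$ separation requires a delicate matching-covering argument whose $\sqrt{\log n}$ dependence is tight and which fundamentally uses the SDP triangle inequalities; everything else (spreading to balance, SDP rounding to cut, degradation from bisection to $2/3$-balanced cut) is comparatively routine. The minor-free case is conceptually cleaner but still demands a nontrivial inductive decomposition whose hidden constants depend on the forbidden minor.
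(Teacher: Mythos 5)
This theorem is not proved in the paper at all: it is quoted as a black box from Arora--Rao--Vazirani \cite{ARV:09} and Klein--Plotkin--Rao \cite{KleinPR:93}, so there is no internal proof to compare against, and your blind sketch faithfully reconstructs the original arguments of those two papers — the SDP with unit vectors, $\ell_2^2$ triangle inequalities and a spreading constraint (whose value is indeed a lower bound on the bisection width, since an integral bisection with antipodal vectors is feasible), the ARV structure theorem via random projection plus the matching/chaining argument, and for the minor-free case the KPR shortest-path decomposition combined with multicommodity flow/cut duality, with constants depending on the excluded minor as you note. The one step you compress slightly too casually is the passage from the well-separated sets $S,T$ to a $2/3$-balanced cut charged against the optimum bisection: in ARV this is not a single threshold rounding but a short iterative argument (repeatedly cutting the large side and charging each cut to the bisection restricted to the remaining piece), though this is standard and does not affect the correctness of your overall plan.
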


We conclude this section with two more statements that will be used later.
\begin{theorem}[Jordan~\cite{Jordan:69}]\label{thm:jordan}
Given a tree on $n$ vertices, there exists a vertex whose removal
partitions the tree into components, each with at most $n/2$ vertices.
\end{theorem}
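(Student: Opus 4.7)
The plan is to identify such a vertex via a minimization argument, using what is usually called the \emph{centroid} of the tree. For each vertex $v$ of $T$, define the weight
\[
w(v)=\max\{|C|:C\text{ is a connected component of }T\setminus v\}.
\]
Let $v^*$ be any vertex minimizing $w$ over $V(T)$; the claim is that $w(v^*)\le n/2$, which is exactly what the theorem asserts.

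I would proceed by contradiction. Suppose $w(v^*)>n/2$, and let $C$ be a component of $T\setminus v^*$ witnessing this, so $|C|>n/2$. Let $u$ be the unique neighbor of $v^*$ in $C$. The plan is then to analyze the components of $T\setminus u$: removing $u$ breaks $C$ into smaller pieces, each of size at most $|C|-1$, while merging $v^*$ together with all the other components of $T\setminus v^*$ into a single component of size $n-|C|$. Since $|C|>n/2$, we have $n-|C|<|C|$, so every component of $T\setminus u$ has size strictly less than $|C|$, giving $w(u)<w(v^*)$ and contradicting the minimality of $v^*$.

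The whole argument is elementary and the only step that needs a little care is the structural claim in the second paragraph, namely that all the components of $T\setminus v^*$ other than $C$, together with $v^*$ itself, really do coalesce into a single component when $u$ is removed rather than $v^*$; this follows because $T$ is a tree, so the path from any vertex outside $C$ to any vertex of $C\setminus\{u\}$ must pass through $u$. Given that, the contradiction is immediate, so there is no substantive obstacle beyond this bit of case analysis.
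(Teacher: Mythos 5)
Your proof is correct, but it takes a different route from the paper's. The paper does not prove Theorem~\ref{thm:jordan} directly (it is cited to Jordan); instead it proves the generalization Lemma~\ref{claim:jordan} (with $n'$ marked vertices) by an iterative walk: start at an arbitrary vertex, and whenever some component of $T\setminus v_i$ contains more than $n'/2$ marked vertices, step to the neighbor $v_{i+1}$ inside that component; the proof then argues that no vertex is ever revisited, so the walk terminates at a vertex with the desired property. You instead use the extremal (centroid) argument: minimize $w(v)=\max\{|C| : C \text{ a component of } T\setminus v\}$ and derive a contradiction from a single step of essentially the same local move, checking carefully that in $T\setminus u$ the heavy component $C$ shatters into pieces of size at most $|C|-1$ while everything else coalesces into a component of size $n-|C|<|C|$. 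The two proofs share the same structural core (stepping toward the heavy component strictly makes progress), but yours replaces the termination argument by a one-shot variational contradiction, which is arguably easier to verify; the paper's walk is closer to an explicit algorithm for finding the vertex and is stated so that it carries over verbatim to the marked-vertex generalization actually needed in the proof of Lemma~\ref{lem:lb1}. Note that your argument generalizes just as easily: defining $w(v)$ as the maximum number of \emph{marked} vertices in a component of $T\setminus v$, the identical contradiction (with $|C|$ counted in marked vertices and $n'$ in place of $n$) proves Lemma~\ref{claim:jordan}, so your approach would suffice for the paper's purposes as well.
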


\begin{lemma}[Kolman and Matoušek~\cite{KM:04}]\label{lem:bw-ee}
Every graph $G$ on $n$ vertices contains a subgraph on at least 
$2n/3$ vertices with edge expansion at least\footnote{Note that in the
paper~\cite{KM:04}, following earlier terminology~\cite{PachSS:96},
a $2/3$-balanced cut is called a bisection.} $b_{2/3}(G)/n$.
\end{lemma}

\bigskip
\section{New Lower Bound}
The main result of this section is captured in the following lemma and its corollary.
\begin{lemma}
\label{lem:lb1}
For every graph $G=(V,E)$ on~$n$ vertices with maximum degree~$\Delta$ and every subgraph $H$ 
of $G$ on $n'$ vertices, we have 
\begin{align}
    \STC(G) & \geq \eexp(H)\cdot \frac{n'-1}{\Delta }\ .
\end{align}
\end{lemma}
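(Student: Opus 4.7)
\medskip

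\noindent\textbf{Proof plan.}
Fix an optimal spanning tree $T$ of $G$ and let $V' = V(H)$, so $|V'| = n'$. The plan is to find a single edge of $T$ whose removal induces a vertex cut of $V$ that splits $V'$ in a reasonably balanced way; once such a cut is found, the definition of edge expansion $\eexp(H)$ immediately lower bounds the number of $H$-edges (hence $G$-edges) crossing the cut, which is exactly the congestion of the tree edge.

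To find the edge, I would assign weight $1$ to every vertex of $V'$ and weight $0$ to the remaining vertices of $V$, so the total weight is $n'$. I then invoke a weighted variant of Jordan's theorem (Theorem~\ref{thm:jordan}) applied to $T$ with these weights: there exists a vertex $c \in V$ such that every connected component of $T\setminus c$ has weight at most $n'/2$. Since $T$ is a spanning tree of $G$, the degree of $c$ in $T$ is at most $\Delta$, so $T\setminus c$ decomposes into at most $\Delta$ components $C_1, \dots, C_k$ with weights $w_1, \dots, w_k$ satisfying $\sum_i w_i \geq n'-1$. By averaging, some $w_{i^\star} \geq (n'-1)/\Delta$.

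Now consider the edge $e = cv$ of $T$ where $v$ is the neighbor of $c$ inside the component $C_{i^\star}$. Removing $e$ from $T$ partitions $V$ into $S_v = C_{i^\star}$ and $S_c = V\setminus C_{i^\star}$, which intersect $V'$ in sets $A$ and $V'\setminus A$ of sizes $w_{i^\star}$ and $n'-w_{i^\star}$. Because $w_{i^\star} \leq n'/2$, the smaller of these sizes is exactly $w_{i^\star} \geq (n'-1)/\Delta$. The definition of edge expansion~\eqref{def:expansion} applied inside $H$ gives $e_H(A, V'\setminus A) \geq \eexp(H)\cdot w_{i^\star}$. Since every such $H$-edge is also a $G$-edge crossing the cut $(S_v, S_c)$, this is a lower bound on $c(e)$, and hence on $\STC(G) = c(G,T) \geq c(e)$, yielding $\STC(G) \geq \eexp(H)\cdot n'/\Delta$ up to a harmless additive slack of $1$ in the numerator.

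The only non-trivial step is the weighted version of Jordan's theorem, which is a standard centroid argument: root $T$ arbitrarily and descend from the root to the unique child whose subtree carries more than half the total weight, stopping when no such child exists; the stopping vertex is the desired $c$. I do not anticipate any real obstacle here, because the overall structure is the classical tree-separator reduction; the only new ingredient is realizing that the high-degree case (where Jordan's separator $c$ has many neighbors in $T$) is precisely where the factor $\Delta$ enters and is unavoidable, since the cut must be realized by a single tree edge rather than by a vertex removal.
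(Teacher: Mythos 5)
Your proposal is correct and follows essentially the same route as the paper: the paper also fixes an optimal spanning tree, applies a marked-vertex (i.e., weighted) variant of Jordan's centroid theorem to the vertices of $H$, uses the degree bound $\Delta$ to find a component containing at least a $1/\Delta$ fraction of $V(H)$, and lower-bounds the congestion of the single tree edge joining that component to the centroid via the expansion of $H$. The additive slack of $1$ you flag (from the centroid itself possibly being marked, giving $(n'-1)/\Delta$ rather than $n'/\Delta$) is glossed over in the paper's own averaging step as well, so your version is, if anything, the more carefully stated one.
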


\begin{corollary}
\label{cor:lb2}
For every graph $G=(V,E)$ with maximum degree~$\Delta$, 
\begin{align}
    \STC(G) & \geq \frac{hb(G)}{3\cdot \Delta}\ .
\end{align}
\end{corollary}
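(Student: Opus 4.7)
The plan is to chain together the two tools already stated: Lemma~\ref{lem:lb1}, which gives a lower bound on $\STC(G)$ in terms of edge expansion of a subgraph, and Lemma~\ref{lem:bw-ee}, which converts a bisection bound into an edge expansion bound on a not-too-small subgraph.

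First, I would pick a witness $H^{*}$ for the hereditary bisection, i.e.\ a subgraph of $G$ on some $m$ vertices with $b(H^{*})=hb(G)$; such an $H^{*}$ exists by the very definition of $hb(G)$ as the maximum of $b(H)$ over all subgraphs. Lemma~\ref{lem:lb1} cannot be applied directly to $H^{*}$, because that lemma needs \emph{edge expansion}, not bisection width, and the two can differ dramatically when the optimal bisection already cuts out a small but poorly-connected piece.

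The second step therefore applies Lemma~\ref{lem:bw-ee} to $H^{*}$ (not to $G$ itself): this produces a subgraph $H\subseteq H^{*}$ on $n'\geq 2m/3$ vertices whose edge expansion satisfies
\begin{align*}
\eexp(H) \;\geq\; \frac{b(H^{*})}{m} \;=\; \frac{hb(G)}{m}.
\end{align*}
Since $H\subseteq H^{*}\subseteq G$, the graph $H$ is itself a subgraph of $G$, so it has maximum degree at most $\Delta$ and is eligible for Lemma~\ref{lem:lb1}.

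Finally, plugging $H$ into Lemma~\ref{lem:lb1} yields
\begin{align*}
\STC(G) \;\geq\; \frac{\eexp(H)\cdot n'}{\Delta} \;\geq\; \frac{(hb(G)/m)\cdot (2m/3)}{\Delta} \;=\; \frac{2\cdot hb(G)}{3\cdot \Delta},
\end{align*}
and the factor $m$ cancels cleanly, which is the whole reason the combination works. I do not expect a real obstacle here: the only conceptual point is that one must apply the Kolman–Matoušek lemma to the bisection-extremal subgraph $H^{*}$ rather than to $G$, since otherwise one would only recover a bound involving $b(G)$ and not $hb(G)$.
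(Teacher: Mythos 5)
Your proposal is correct and matches the paper's own proof essentially step for step: both take a subgraph $H'$ attaining $b(H')=hb(G)$, apply Lemma~\ref{lem:bw-ee} to $H'$ to extract a subgraph $H$ on at least $2|V(H')|/3$ vertices with expansion at least $b(H')/|V(H')|$, and then feed $H$ into Lemma~\ref{lem:lb1}, with the vertex count cancelling exactly as you note. Your remark that Lemma~\ref{lem:bw-ee} must be applied to the bisection-extremal subgraph rather than to $G$ itself is precisely the point of the paper's argument as well.
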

Before proving the lemma and its corollary, we state a slight generalization of
Theorem~\ref{thm:jordan}; for the sake of completeness, we also provide proof of
it, though it is a straightforward extension of the standard proof of
Theorem~\ref{thm:jordan}.
\begin{lemma}\label{claim:jordan}
Given a tree $T$ on $n$ vertices with $n'\leq n$ vertices marked, there exists a vertex
(marked or unmarked) whose removal partitions the tree into components, each with at most
$n'/2$ marked vertices.
\end{lemma}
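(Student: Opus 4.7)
The plan is to imitate the classical centroid argument for Theorem~\ref{thm:jordan}, using the number of \emph{marked} vertices in a subtree as the weight instead of the total number of vertices. The key observation that makes the whole argument work is a uniqueness property: at any vertex $v$, at most one component of $T\setminus v$ can contain strictly more than $n'/2$ marked vertices, since the marked vertices outside $v$ sum to at most $n'$.

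Concretely, I would run the following procedure. Start at an arbitrary vertex $v_0$ of $T$. If every component of $T\setminus v_0$ contains at most $n'/2$ marked vertices, we are done. Otherwise, by the uniqueness observation, there is a unique component $C$ of $T\setminus v_0$ with more than $n'/2$ marked vertices; let $v_1$ be the neighbor of $v_0$ lying in $C$, and repeat the procedure at $v_1$. I would then argue that this walk never revisits a vertex: if we moved from $v_i$ to $v_{i+1}$ because the $v_{i+1}$-side contained more than $n'/2$ marked vertices, then the $v_i$-side (as seen from $v_{i+1}$) contains fewer than $n'/2$ marked vertices, so the procedure at $v_{i+1}$ will never send us back toward $v_i$. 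Hence the walk traces a simple path in $T$ and must terminate after finitely many steps at a vertex satisfying the required property.

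The main (very mild) obstacle is verifying that the monovariant behaves correctly in the presence of unmarked vertices: the weight of a component is measured purely by the number of marked vertices it contains, while the steps of the walk still move through the full tree, and one has to be careful that the component containing the previous vertex always stays below the $n'/2$ threshold. This is exactly handled by the uniqueness observation above, so the argument goes through unchanged from the classical proof, with ``$n/2$'' replaced by ``$n'/2$'' throughout.
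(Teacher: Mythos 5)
Your proposal is correct and follows essentially the same walk argument as the paper: start at an arbitrary vertex, repeatedly step toward the unique component holding more than $n'/2$ marked vertices, and observe the walk never revisits a vertex and hence terminates. In fact you spell out the non-revisiting step (the $v_i$-side seen from $v_{i+1}$ has fewer than $n'/2$ marked vertices) in slightly more detail than the paper, which merely asserts that the $v_i$ are pairwise distinct.
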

\begin{proof} 
Start with an arbitrary vertex $v_0\in T$ and set $i=0$. We proceed as follows. 
If the removal of $v_i$ partitions the tree into components such that each contains
at most $n'/2$ marked vertices, we are done. Otherwise, exactly one of the components,
say a component $C$, has strictly more than $n'/2$ marked vertices. Let $v_{i+1}$ be the neighbour
of $v_i$ that belongs to the component $C$. Note that for every $i>0$, $v_i$ is different from
all the vertices $v_0, v_1,\ldots, v_{i-1}$. As the number of vertices in the tree is
bounded, eventually, this process has to stop, and we get to a vertex with the desired properties. 
\end{proof}

\begin{proof}[of {Lemma~\ref{lem:lb1}}]
Let $T$ be the spanning tree of $G$ with the minimum congestion. By
Lemma~\ref{claim:jordan}, there exists a vertex $z\in T$ whose removal
partitions the tree $T$ into components, each with at most $n'/2$ vertices from
$H$. As the maximum degree of $G$ is $\Delta$, the number of the components is
at most $\Delta$ and, thus, at least one of them, say a component $C$, 
has at least $(n'-1)/\Delta$ vertices from $H$. 
Exploiting the definition of expansion~(\ref{def:expansion}) applied to the subgraph $H$
and its vertex subset $C\cap V(H)$ of size at least $(n'-1)/\Delta$, we get
\[
e(C,V\setminus C)\geq e(C\cap V(H),V(H)\setminus C)\geq \eexp(H)\cdot \frac{n'-1}{\Delta} \ \ .
\]
As for each edge $uv\in
E(C,V\setminus C)$, the path between $u$ and $v$ in $T$ uses 
the single edge connecting the component $C$ with the vertex $z$,
we conclude that
\[    \STC(G) \geq \eexp(H)\cdot \frac{n'-1}{\Delta }\ \ .\]
\end{proof}

\begin{proof}[of {Corollary~\ref{cor:lb2}}]
Consider a subgraph $H'$ of $G$ such that $b_{2/3}(H') = hb(G)$. By Lemma~\ref{lem:bw-ee}, 
there is a subgraph $H$ of $H'$, such that $|V(H)| \ge 2\cdot |V(H')|/3$
and $\beta(H) \ge b_{2/3}(H')/|V(H')|$. Since $H$ is a subgraph of $G$,
by Lemma~\ref{lem:lb1}, 
\[
\STC(G) \ge \beta(H) \cdot \frac{|V(H)|-1}{\Delta} \ge
\frac{b_{2/3}(H') \cdot (|V(H)|-1)}{V(H')\cdot \Delta} \ge
\frac{b_{2/3}(H')}{3 \cdot \Delta} = \frac{hb(G)}{3 \cdot \Delta} \ . 
\]
\end{proof}

Note that the bound of Corollary~\ref{cor:lb2} is tight in the following 
sense: there exist graphs for which $\STC(G)=\Theta(\frac{hb(G)}{\Delta})$. 
As in Subsection~\ref{subsec:sketch}, let $G$ be the graph obtained from a
$3$-regular expander $H$ on $n$ vertices by adding a new vertex $r$ and
connecting it by an edge to every vertex of $H$. Then $\STC(G)=4$ (the optimal
spanning tree is a star graph rooted in $r$), $hb(G)=\Theta(n)$ and $\Delta=n$.

\section{Approximation Algorithm}
Given a connected graph $G=(V,E)$, we construct a spanning tree of $G$ by the
recursive algorithm \textsc{\mbox{CongSpanTree}} called on the graph $G$. In step 3,
one of the algorithms of Theorem~\ref{thm:arv} is used: for general graphs, the
algorithm by Arora, Rao and Vazirani~\cite{ARV:09}, for graphs excluding any
fixed graph as a minor, the algorithm by Klein, Plotkin and Rao~\cite{KleinPR:93}; 
by $\alpha(n)$ we denote the respective pseudo-approximation factor.
\begin{algorithm}
\caption{\textsc{CongSpanTree}$(H)$}
\begin{algorithmic}[1]
    \If{$|V(H)|=1$}
        \State \textbf{return} $H$
    \EndIf
    \State \mbox{construct a $\frac34$-balanced cut
$(S,V(H)\setminus S)$ of $H$} 
    \State $F\gets E(S,V(H)\setminus S)$
    \For{\mbox{each connected component $C$ of $H\setminus F$}}
        \State $T_C \gets$  \textsc{CongSpanTree}$(C)$
    \EndFor
    \State {arbitrarily connect all the spanning 
	trees $T_C$ by edges from $F$ 
        to form a spanning		
	tree $T$ of $H$}
    \State \textbf{return} $T$
\end{algorithmic}
\end{algorithm}

Let $\tau$ denote the tree representing the recursive decomposition of $G$
(implicitly) constructed by the algorithm \textsc{CongSpanTree}:
The root $r$ of $\tau$ corresponds to the graph $G$, and the 
children of a non-leaf node $t\in \tau$ associated with a set $V_t$ correspond
to the connected components of $G[V_{t}] \setminus F$ where $F$
is the set of edges of the $\frac34$-balanced cut of $G[V_{t}]$ from step 4; 
by Theorem~\ref{thm:arv}, $|F|\leq \alpha(n)\cdot b_{2/3}(G[V_t])$.
We denote by $G_t=G[V_t]$ the subgraph of $G$ induced by the vertex set $V_t$,
by $T_t$ the spanning tree constructed for $G_t$ by the algorithm \textsc{CongSpanTree}.
The \emph{height} $h(t)$ of a tree node $t\in \tau$ is the number of edges on the longest 
path from $t$ to a leaf in its subtree (i.e., to a leaf that is a descendant of $t$). 

\begin{lemma}\label{lem:recurse}
Let $t\in \tau$ be a node of the decomposition tree and $t_1,\ldots,t_k$ its children. 
Then
\begin{align}
    c(G_t,T_t)\leq \max_{i}c(G_{t_i},T_{t_i}) + \alpha(n)\cdot b_{2/3}(G_t) \ .
\end{align}    
\end{lemma}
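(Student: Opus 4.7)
The plan is to bound the congestion of each edge of $T_t$ separately, splitting into two cases depending on whether the edge came from one of the recursively built subtrees $T_{t_i}$ or from the cut set $F$ used in step 9 of the algorithm. The central observation is that, by construction, $T_t$ is obtained by keeping each $T_{t_i}$ intact and linking them with some edges of $F$; in particular, for any edge $e \in T_t$ that we delete, each subtree $T_{t_j}$ that does not contain $e$ stays connected, so the entire vertex set $V(G_{t_j})$ lies on one side of the cut of $V(G_t)$ induced by removing $e$ from $T_t$.

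First I would treat the case $e \in T_{t_i}$. Removing $e$ from $T_t$ partitions $V(G_t)$ into two sets $A, B$; by the observation above, $A \cap V(G_{t_i})$ and $B \cap V(G_{t_i})$ are exactly the two components of $T_{t_i} \setminus e$, while for every $j \neq i$ the whole set $V(G_{t_j})$ sits inside $A$ or inside $B$. Consequently every edge of $G_t$ crossing $(A,B)$ is either an edge of $G_{t_i}$ between the two components of $T_{t_i} \setminus e$ — contributing at most $c(G_{t_i}, T_{t_i}) \leq \max_j c(G_{t_j}, T_{t_j})$ — or an inter-component edge of $G_t$, which by definition of the connected components of $G_t \setminus F$ must belong to $F$, contributing at most $|F|$.

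Next I would handle the case $e \in F$. Now every $T_{t_j}$ is intact after removing $e$, so each $V(G_{t_j})$ lies entirely on one side of the cut; therefore no edge of any $G_{t_j}$ crosses the cut, and the crossing edges are all in $F$, contributing at most $|F|$. To finish, I would invoke Theorem~\ref{thm:arv}, which guarantees $|F| \leq \alpha(n) \cdot b(G_t)$ since the cut chosen in step 4 is an $\alpha(n)$-pseudo-approximation to the optimum bisection. Combining the two cases and taking the maximum over $e \in T_t$ yields the claimed inequality.

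I do not expect a real obstacle here: the argument is structural and hinges only on the fact that deleting a tree edge splits $T_t$ in a way that respects the recursive decomposition, so the crossing edges decompose cleanly into "internal" contributions (absorbed by the recursive term) and "cut" contributions (absorbed by the $\alpha(n) \cdot b(G_t)$ term). The only point requiring slight care is to verify that all inter-component edges of $G_t \setminus F$ do lie in $F$, which follows directly from the definition of the connected components in step 6.
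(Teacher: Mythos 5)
Your proof is correct and follows essentially the same route as the paper: a case split on whether the tree edge $e$ lies in some $T_{t_i}$ or in $F$, bounding the crossing edges by $c(G_{t_i},T_{t_i})+|F|$ in the first case and by $|F|$ in the second, then applying $|F|\leq \alpha(n)\cdot b(G_t)$ from Theorem~\ref{thm:arv}. Your write-up is in fact slightly more explicit than the paper's, since you justify why no edge internal to a $G_{t_j}$ with $j\neq i$ can cross the cut (each untouched $T_{t_j}$ stays connected, so $V(G_{t_j})$ lies entirely on one side), a point the paper leaves implicit.
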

\begin{proof}
Let $F$ be the set of edges of the $\frac34$-balanced cut of $G_t$ from step 4.
We will show that for every edge $e\in E(T_t)$, its congestion $c(e)$ with
respect to $G_t$ and $T_t$ is at most $\max_{i}c(G_{t_i},T_{t_i}) + |F|$; as
$|F|\leq \alpha(n)\cdot b_{2/3}(G[V_t])$, this will prove the lemma. Recall that
$E(T_t)\subseteq \bigcup_{i=1}^k E(T_{t_i}) \cup F$, as the spanning tree $T_t$
is constructed (step 7) from the spanning trees $ T_{t_1},\ldots, T_{t_k}$ and
the set~$F$.

Consider first an edge $e\in E(T_t)$ that belongs to a tree $T_{t_i}$, for some
$i$. The only edges from $E(G)$ that may contribute to the congestion $c(e)$ of
$e$ with respect to $G_t$ and $T_t$ are the edges in $E(G_{t_i})\cup F$; the
contribution of the edges in $E(G_{t_i})$ is at most $c(G_{t_i},T_{t_i})$, the
contribution of the edges in $F$ is at most $|F|$. Thus, the congestion $c(e)$
of the edge $e$ with respect to $G_t$ and $T_t$ is at most
$c(G_{t_i},T_{t_i})+|F|$.

Consider now an edge $e\in F\cap E(T_t)$. As the only edges from $E(G)$ that may contribute
to the congestion $c(e)$ of $e$ with respect to $G_t$ and $T_t$ are the edges in $F$, 
its congestion is at most $|F|$. 

Thus, for every edge $e\in E(T_t)$, its congestion with respect to $G_t$ and
$T_t$ is at most $\max_{i}c(G_{t_i},T_{t_i}) + |F|$, and the proof of the lemma
is completed.
\end{proof}

\begin{lemma}\label{cor:stc}
Let $T=\textsc{CongSpanTree}(G)$. Then
\begin{align}
    c(G,T)\leq \O(\alpha(n)\cdot \log n)\cdot hb(G) \ .
\end{align}
\end{lemma}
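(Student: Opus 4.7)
The plan is to unroll the recurrence of Lemma~\ref{lem:recurse} along a longest root-to-leaf path of the decomposition tree $\tau$, and to bound the number of levels by $O(\log n)$ using the $\frac{2}{3}$-balanced property of the cut used at each recursive step.

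First I would observe that for every node $t\in\tau$ the subgraph $G_t$ is an induced subgraph of $G$, so by definition of hereditary bisection we have $b(G_t)\leq hb(G)$. Hence each application of Lemma~\ref{lem:recurse} adds at most $\alpha(n)\cdot hb(G)$ to the congestion.

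Second I would establish that the height $h(r)$ of the root of $\tau$ is $O(\log n)$. The key observation is that the children of a node $t$ are connected components of $G_t\setminus F$, and each such component is contained on one side of the $\frac{2}{3}$-balanced cut $(S,V(G_t)\setminus S)$; therefore every child $t_i$ satisfies $|V_{t_i}|\leq \tfrac{2}{3}|V_t|$. A simple induction on $|V_t|$ then gives $h(t)\leq \log_{3/2}|V_t|$, so in particular $h(r)=O(\log n)$. For a leaf $\ell$ we have $|V_\ell|=1$ and thus $c(G_\ell,T_\ell)=0$, so this is a clean base case.

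Combining these two facts, induction on $h(t)$ yields
\begin{align*}
c(G_t,T_t) \;\leq\; h(t)\cdot \alpha(n)\cdot hb(G)\ ,
\end{align*}
and applying this to the root $t=r$ gives the desired bound $c(G,T)\leq \O(\alpha(n)\cdot \log n)\cdot hb(G)$. There is no real obstacle here; the content of the lemma is that the new lower bound of Corollary~\ref{cor:lb2} interacts cleanly with the $\frac{2}{3}$-balanced Divide and Conquer scheme, so one only needs to verify that each recursive subgraph is a subgraph of $G$ (so the hereditary quantity dominates each term) and that balance implies logarithmic depth.
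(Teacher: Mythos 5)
Your proof is correct and follows essentially the same route as the paper: induction on the height of nodes in $\tau$ using Lemma~\ref{lem:recurse} together with $b(G_t)\leq hb(G)$, giving $c(G_t,T_t)\leq h(t)\cdot\alpha(n)\cdot hb(G)$, and then bounding the depth by $O(\log n)$ via the $\frac23$-balanced cuts. Your argument is in fact slightly more explicit than the paper's on the depth bound (spelling out that each child lies in one side of the cut, so $|V_{t_i}|\leq\frac23|V_t|$), which the paper only remarks in passing.
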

\begin{proof}
For technical reasons, we extend the notion of the spanning tree congestion also to the 
trivial graph $H=(\{v\},\emptyset)$ consisting of a single vertex and no edge (and having 
a single spanning tree $T_H=H$) by defining $c(H,T_H)=0$.

By induction on the height of vertices in the decomposition tree $\tau$, 
we prove the following auxiliary claim: for every $t\in \tau$, 
\begin{align}\label{eq:recur}
    c(G_t,T_t) \leq  h(t)\cdot \alpha(n)\cdot hb(G) \ .
\end{align}
Consider first a node $t\in \tau$ of height zero, that is, a node $t$ that is a leaf. 
Then both sides of~(\ref{eq:recur}) are zero and the inequality holds.

Consider now a node $t\in \tau$ such that for all his children
the inequality~(\ref{eq:recur}) holds. Let $t'$ be the child of the node $t$ for which
$c(G_{t'},T_{t'})$ is the largest among the children of $t$. Then, 
as $b_{2/3}(G_t)\leq hb(G)$ by the definition of $hb$, by Lemma~\ref{lem:recurse} we get
\[
    c(G_t,T_t) \leq c(G_{t'},T_{t'}) +  \alpha(n) \cdot hb(G) \ .
\]
By the inductive assumption applied on the node $t'$, 
\[
    c(G_{t'},T_{t'}) \leq  h(t')\cdot \alpha(n)\cdot hb(G) \ .
\]
Because $h(t')+1\leq h(t)$, the proof of the auxiliary claim is completed.

Observing that the height of the root of the decomposition tree $\tau$ is at
most $\O(\log n)$, as all cuts used by the algorithm are balanced, the proof is
completed.
\end{proof}

\begin{theorem}\label{thm:algorithm}
Given a graph $G$ with maximum degree $\Delta$, the algorithm
\textsc{CongSpanTree} constructs an $\O(\Delta\cdot \log^{3/2} n)$-approximation of
the minimum congestion spanning tree; for graphs excluding any fixed graph as a minor,
the approximation is $\O(\Delta\cdot \log n)$.
\end{theorem}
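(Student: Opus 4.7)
The plan is to obtain the approximation ratio by comparing two bounds that are already proved: the upper bound on the congestion of the spanning tree produced by \textsc{CongSpanTree} from Lemma~\ref{cor:stc}, and the lower bound on $\STC(G)$ from Corollary~\ref{cor:lb2}. The statement of Theorem~\ref{thm:algorithm} should then follow by simple arithmetic, together with the two choices of $\alpha(n)$ given by Theorem~\ref{thm:arv}.

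Concretely, I would first apply Lemma~\ref{cor:stc} to the tree $T=\textsc{CongSpanTree}(G)$ to conclude that $c(G,T)\leq \O(\alpha(n)\cdot \log n)\cdot hb(G)$. Then Corollary~\ref{cor:lb2} gives $hb(G)\leq \O(\Delta)\cdot \STC(G)$. Substituting this inequality into the previous bound yields
\[
c(G,T)\leq \O(\alpha(n)\cdot \Delta\cdot \log n)\cdot \STC(G),
\]
so the approximation ratio is $\O(\alpha(n)\cdot \Delta \cdot \log n)$.

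The final step is to instantiate $\alpha(n)$ according to Theorem~\ref{thm:arv}. For a general graph, the Arora--Rao--Vazirani algorithm used in step~3 satisfies $\alpha(n)=\O(\sqrt{\log n})$, which produces the advertised $\O(\Delta \cdot \log^{3/2} n)$-approximation. For graphs excluding a fixed minor, the Klein--Plotkin--Rao algorithm gives $\alpha(n)=\O(1)$, yielding the stronger $\O(\Delta \cdot \log n)$-approximation. I would also briefly note polynomial running time: the recursion tree $\tau$ has depth $\O(\log n)$ because the cuts are $\frac23$-balanced, and each recursive call performs only a polynomial-time balanced-cut computation together with bookkeeping, so the total work is polynomial.

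There is no real obstacle in the proof, since all the hard work has already been done: the nontrivial ingredients are the new lower bound (Corollary~\ref{cor:lb2}) that ties $\STC(G)$ to the hereditary bisection, and the recursive congestion bound (Lemma~\ref{cor:stc}) that ties the output of \textsc{CongSpanTree} to $hb(G)$. The only mild care needed is to make sure that $hb(G_t)\leq hb(G)$ is used implicitly (which it already is inside Lemma~\ref{cor:stc}) and that $\alpha(n)$ is defined consistently as the pseudo-approximation factor of the balanced-cut subroutine on instances with at most $n$ vertices.
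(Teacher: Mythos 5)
Your proposal is correct and follows exactly the paper's own argument: combine the upper bound of Lemma~\ref{cor:stc} with the lower bound of Corollary~\ref{cor:lb2}, then instantiate $\alpha(n)=\O(\sqrt{\log n})$ (Arora--Rao--Vazirani) or $\alpha(n)=\O(1)$ (Klein--Plotkin--Rao) via Theorem~\ref{thm:arv}. Your added remark on polynomial running time is a harmless (and reasonable) supplement that the paper leaves implicit.
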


\begin{proof}
By Corollary~\ref{cor:lb2}, for every graph $G$, $\Omega(hb(G)/\Delta)$ is a
lower bound on $\STC(G)$. By Lemma~\ref{cor:stc}, the algorithm
$\textsc{CongSpanTree}(G)$ constructs a spanning tree $T$ of congestion at most
$\O(\alpha(n)\cdot \log n)\cdot hb(G)$. Combining these two results yields the
theorem: $c(G,T)\leq \O(\alpha(n) \cdot \log n)\cdot hb(G) \leq \O(\Delta\cdot\alpha(n)
\cdot \log n )\cdot \STC(G)$. Plugging in the bounds on $\alpha(n)$
from Theorem~\ref{thm:arv} yields the theorem.
\end{proof}

\section{Conclusion and Open Problems}
We have designed an $o(n)$-approximation algorithm for the spanning tree
congestion problem for graphs with maximum degree bounded by
$o(n/\log^{3/2}n)$.  An inevitable question is whether it is possible to
eliminate or weaken the dependence of the approximation ratio of the algorithm on the
largest degree and obtain an $o(n)$-approximation algorithm for the {\STP}
problem for all graphs.
A related challenge is to establish a stronger lower bound on the
approximability of the $\STP$ problem for graphs with large degrees. 

A different kind of problem is to find other applications of the hereditary
bisection in the context of approximation algorithms.

To state the last problem, we first define, following Law and Ostrovskii~\cite{LO:10}, 
\begin{align*}
    f(G,c)=\min\{e(S,V\setminus S)\ | \ S\subset V, |S|=c, \mbox{ $G[S]$ is connected}\} \ ,
\end{align*}
where $G=(V,E)$ is a given graph and $c$ is a given integer.
They proved the following lower bound:
\begin{align}
      \STC(G) & \geq \min\left \{f(G,c)  \ \middle| \ \left \lceil\frac{n-1}{\Delta}\right \rceil \leq c \leq \frac{n}{2} \right \}\ \mbox{. } \label{lb:connected-sized-cuts}
\end{align}
Note that $f(G,c)$ is the size of a cut $E(S,V\setminus S)$ in $G$ satisfying three properties:
\begin{itemize}
\item[i)] the subgraph of $G$ induced by $S$ is \emph{connected}, 
\item[ii)] the subset $S$ has a \emph{prescribed size}, and
\item[iii)] the number of edges $e(S, V\setminus S)$ is \emph{the smallest}
among all subsets satisfying the properties i) and~ii).
\end{itemize}
We call the task of finding such a cut the {\em minimum connected $c$-cut} problem.
If only any two of these three properties are required, 
then the problem of finding such a cut is solvable, or
at least well approximable, in polynomial time:
\begin{itemize}
\item For the minimum $c$-cut, not necessarily connected, Feige and Krauthgamer~\cite{FK:02}
give a polylogarithmic approximation. 
\item For the minimum connected cut without the size constraint, a standard minimum 
cut algorithm solves the problem (cf.~\cite{KT:26}).
\item A subset $S\subseteq V$ of size $c$ inducing a connected graph, not necessarily minimizing
$e(S,V\setminus S)$, can be constructed greedily. 
\end{itemize}
However, we are not aware of any non-trivial approximation if all
three requirements have to be at least approximately satisfied.
Thus, another open problem is to design a non-trivial approximation algorithm for the minimum 
connected $c$-cut problem, and then to use it for a better approximation of the {\STP}
problem.

\acknowledgements
The authors would like to thank the anonymous referees for their constructive comments.

\bibliographystyle{plainurl}
\bibliography{stc-bibliography-doi}
\end{document}